




\documentclass{ecai} 



\usepackage{latexsym}
\usepackage{amssymb}
\usepackage{amsmath}
\usepackage{amsthm}
\usepackage{booktabs}
\usepackage{enumitem}
\usepackage{graphicx}
\usepackage{color}
\usepackage[small]{caption}
\usepackage{subcaption}



\newtheorem{theorem}{Theorem}

\newtheorem{definition}{Definition}



\newcommand{\BibTeX}{B\kern-.05em{\sc i\kern-.025em b}\kern-.08em\TeX}


\begin{document}


\begin{frontmatter}


\paperid{123} 


\title{Degree Matrix Comparison for Graph Alignment}


\author[A]{\fnms{Ashley}~\snm{Wang}\thanks{Corresponding Author. Email: ashley.m.wang.25@dartmouth.edu.}
}
\author[B]{\fnms{Peter}~\snm{Chin}}

\address[A]{Department of Mathematics, Dartmouth College}
\address[B]{Thayer School of Engineering}


\begin{abstract}
The graph alignment problem, which considers the optimal node correspondence across networks, has recently gained significant attention due to its wide applications. There are graph alignment methods suited for various network types, but we focus on the unsupervised geometric alignment algorithms. We propose Degree Matrix Comparison (DMC), a very simple degree-based method that has shown to be effective for heterogeneous networks. Through extensive experiments and mathematical proofs, we demonstrate the potential of this method. Remarkably, DMC achieves up to 99\% correct node alignment for 90\%-overlap networks and 100\% accuracy for isomorphic graphs. Additionally, we propose a reduced Greedy DMC with lower time complexity and Weighted DMC that has demonstrated potential for aligning weighted graphs. Positive results from applying Greedy DMC and the Weighted DMC furthermore speaks to the validity and potential of the DMC. The sequence of DMC methods could significantly impact graph alignment, offering reliable solutions for the task.
\end{abstract}

\end{frontmatter}


\section{Introduction}
Graph alignment is a critical task with applications across various domains, like social networks, biology, and cybersecurity. In social networks, graph alignment can identify the same user across multiple platforms, providing insights into user behavior and preferences. In biology, it helps in comparing molecular structures and understanding functional similarities between biological networks. In cybersecurity, aligning graphs can detect coordinated attacks and identify malicious entities by correlating data from different sources. Graph alignment is also highly related to the subgraph isomorphism problem in mathematics and computer science, which is an NP-complete problem. Improving the efficiency and simplifying implementation steps of graph alignment algorithms are important in artificial intelligence because they help us learn patterns on networks. 

There are supervised graph alignment methods that work well already \citep{LiCaoHou:semi-supervised-example}. This is not surprising as supervised and semi-supervised methods exploit node attributes, such as semantic features of users \citep{Staab2009,Yang2021}. However, they break down when users disguise their identities and node attributes are no longer reflective of local properties \citep{Allison2024}. Moreover, they require some knowledge of the node labeling on a network prior to the alignment. There are also unsupervised methods that rely on the power of node attributes, like WAlign \citep{Gao2021}. We, however, focus on the unsupervised graph alignment problem for unattributed plain graphs, mainly leveraging geometric properties. 

Our proposed method, Degree Matrix Comparison (DMC), is a method that fits the task. We list some previous unsupervised methods that exploit graph geometries: REGAL, which is a representation learning-based method \citep{Heimann2018}; FINAL, which mainly utilizes graph topology but enhances it with node attributes \citep{Zhang2016}; Klau, which utilizes a Lagrangian relaxation approach in an optimization problem \citep{Klau2009}; and IsoRank, which constructs an eigenvalue problem for every pair of input graphs and then uses k-partite matching techniques \citep{Singh2008}. For a more comprehensive view of past unsupervised alignment methods, see \citet{Skitsas2023}. 

DMC was constructed with the following two core ideas in mind. First of all, degree is the most accurate and direct description of local structure. Moreover, in representing geometric structure, it makes sense to include both local and global information, like in \citet{Kuchaiev2010,Milano2016}, especially in compact matrix form (like in adjacency matrices and Laplacian matrices). This is not just because ``the more the better'', rather it is a consequence of the Friendship Paradox. More specifically, we propose foregoing the need to know exact connections between nodes, like in an adjacency matrix, and record accurately two-layer exact information--degree of a node and its neighbors' degrees in a matrix--in a clever way, for later use of the Hungarian algorithm. Since we are aligning unattributed graphs, a method that is based on information about links between labeled nodes is not rigorous, because the assignment of node IDs could be random on the pair of graphs to be aligned. Although the Hungarian algorithm is widely known and has been integrated into other graph alignment methods \citep{Bougleux2017}, our approach merits further attention and research due to its exceptional applicability to real heterogeneous networks--rather than restricted ideal graphs like k-partite graphs or complete graphs--with very simple implementation steps.

\section{Problem Set Up}
\begin{definition}[Graph] A \emph{graph} is a pair $G = (V, E)$, where elements $v \in V$ are called vertices (or nodes), and $E$ is a set consisting of unordered (for our purposes only) pairs $\{v_{1}, v_{2}\}$ for some but not necessarily all of $v_{1}, v_{2} \in V$. 

\end{definition}

Geometrically, $G$ could represent an object with vertices and edges that connect some of the vertices. The presence of $\{v_{1}, v_{2}\} \in E$ indicates there is an edge between the vertices $v_{1}, v_{2}$. We talk about graphs in a manner that assumes a geometric interpretation. But there are other ways to interpret what we call a ``graph''; a most direct example is the adjacency matrix, which is a binary matrix representation of graphs. 

\begin{definition}[Graph Alignment]
If we have $G_{1} = (V_{1}, E_{1})$ and $G_{2} = (V_{2}, E_{2})$ for graph alignment, we try to find the optimal mapping $f: V_{1} \rightarrow V_{2}$ to maximize the number of $v_{1} \in C \subseteq V_{1}$ such that $v_{1} = v_{2} = f(v_{1}) \in V_{2}$. This process is called \emph{Graph Alignment}.
\end{definition}
To perform graph alignment, we generate a pair of graphs with common structure. We introduce two graph sampling methods that sample size-wise manageable smaller graphs from larger complex graphs. 

\subsection{Random Walk Graph Sampling}
One of our methods utilizes random walk on graphs to extract subgraphs \citep{Allison2024,Leskovec2006}.

\begin{definition}[Random Walk (on graphs)] Suppose we start random walk at $v \in V$. We first find the set of neighbors of $v$, which are vertices connected to $v$ by at least one edge, that we denote as $S_{nei}$. Then we randomly pick $v_{n} \in S_{nei}$, and move from $v$ to $v_{n}$. Then we do the same thing for $v_{n}$, and all following nodes iteratively, until called to stop. This process is called random walk.
\end{definition}

We take subgraph $G_{s} = (V_{s}, E_{s})$, with $n = |V_{s}|$ distinct nodes, of a real world network $G_{r} = (V_{r}, E_{r})$ through random walk. This is achieved through the following procedures: we pick random $v_{0} \in V_{r}$, and initiate random walk on $G_{r}$ until we either get one connected component (a graph that cannot be represented as the disjoint union of two graphs) with $n$ distinct nodes, or exhaust a connected component in $G_{r}$ with insufficient nodes, whichever comes first. If the latter is true, we randomly pick a new (different from all previous nodes that we walked on) node $v_{n} \in V_{r}$ and continue random walk until we obtain a sampled graph with the desired $n$ distinct nodes and as few connected components as possible. $V_{s}$ is the set that contains the $n$ distinct nodes from the random walk process just described. $E_{s} \subseteq E_{r}$ inherits edges between sampled nodes in $V_{s}$ from $G_{r}$. 

To construct $G_{1} = (V_{1}, E_{1}) \subseteq G_{s}$ and $G_{2} = (V_{2}, E_{2}) \subseteq G_{s}$ for graph alignment, we perform random walk on $G_{s}$ to first obtain a set of nodes that serves as the common set of nodes $C = V_{1} \bigcap V_{2} \subset G_{s}$. The number of vertices we want to stop at, for the random walk, is $|C| = np$, where $n$ is like before the number of distinct nodes in $G_{s}$ and $p$ is what we call the ``overlap percentage" (with respect to $G_{s}$). We then take arbitrary $H_{1}, H_{2} \subseteq (V_{s}\setminus C)$ such that $|H_{1}| = |H_{2}|$ and $V_{s}\setminus C = H_{1} \sqcup H_{2}$ (disjoint union). Let $V_{1} = C \sqcup H_{1}$ and $V_{2} = C \sqcup H_{2}$, then $|V_{1}| = |V_{2}|$. $E_{1} \subseteq E_{s}$ is the inherited set of edges between nodes in $V_{1}$ from $E_{s}$, likewise for $E_{2}$. 

Figure~\ref{fig:randomW} is a demonstration for what we are trying to take off the entire graph. There are three green circles containing different regions of the graph; the two larger ones contain the same number of nodes, and the smallest circle contains the common region between the two possible sampled graphs through random walk. This is not the only sampling possible.

\begin{figure}
    \centering
\includegraphics[width=.3\textwidth]{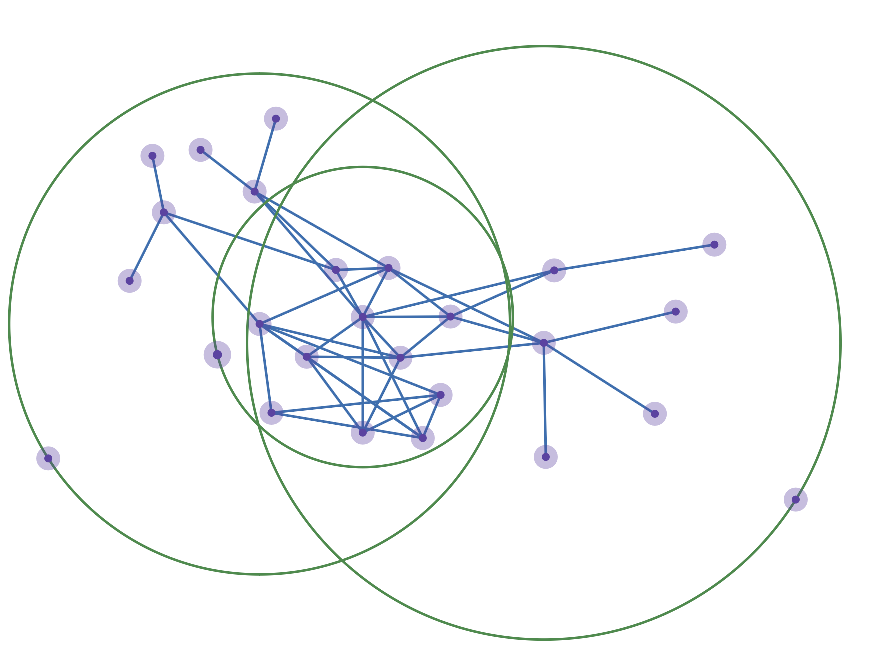}
    \caption{Graph Sampling with Random Walk}
    \label{fig:randomW}
\end{figure}

\subsection{Edge Deletion Graph Sampling}
The second graph sampling method is from \citet{Heimann2018}. Suppose $G_{s}$ is the same graph as in Section 2.1, then let $G_{1} = G_{s} = (V_{1}, E_{1})$, and we delete edges of $G_{1}$ with probability $p_{d}$ to obtain another graph, which is $G_{2}$. If $G_{2} = (V_{2}, E_{2})$, then $V_{2} = V_{1}$, and $E_{2} \subseteq E_{1}$.  
\begin{figure}
    \centering    \includegraphics[width=.35\textwidth]{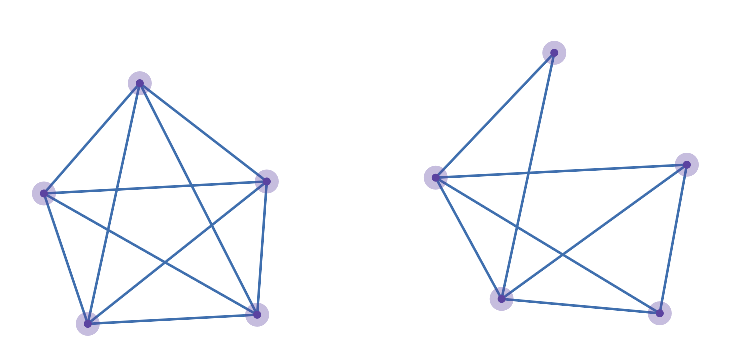}
    \caption{Graph Sampling with Edge Deletion}
    \label{fig:edgeD}
    \vspace{0.5cm}
\end{figure}

Figure~\ref{fig:edgeD} is a simple demonstration of the edge deletion graph sampling method, where two edges from the left graph were randomly deleted to obtain the right graph.

\section{Proposed Graph Alignment Methods}
\subsection{Degree Matrix Comparison}
Our main proposed method requires creating two \emph{degree matrices} and comparing them. We denote $\text{deg}(v)$ as the degree of a node $v$, where degree is simply the number of edges connected to a vertex, or more formally, the number of unordered pairs in the edge set $E$ that contains the vertex.

\begin{definition}[Degree Matrices]
Consider a pair of graphs $G_{1} = (V_{1}, E_{1})$ and $G_{2} = (V_{2}, E_{2})$. By our problem set up, $N = |V_{1}| = |V_{2}|$. Let $V_{1} = \{v_{i}\}_{i=1}^{N}$ and $V_{2} = \{w_{j}\}_{j=1}^{N}$. Let $m_{1} = \text{max}(\{\text{deg}(v_{i})\}_{i=1}^{N})$ and $m_{2} = \text{max}(\{\text{deg}(w_{j})\}_{j=1}^{N})$, and $m = \text{max}(m_{1}, m_{2})$. Then the \emph{degree matrix} $M_{1}$ for $G_{1}$ is a matrix of dimension $(N, m)$, where each row left-aligns, for a unique $v_{1} \in V_{1}$, its neighbors' degrees in ascending order, with zero-padding to the right. We define $M_{2}$ similarly.
\end{definition}
The order of rows in $M_{1}$ and $M_{2}$ is arbitrary. After obtaining $M_{1}$ and $M_{2}$, we use the Hungarian algorithm to minimize cost between rows of the matrices, producing a map between rows of $M_{1}$ and $M_{2}$. We migrate this assignment directly to the assignment $f: V_{1} \rightarrow V_{2}$ between nodes. We call this entire process the \emph{Degree Matrix Comparison (DMC)}. 

We provide examples here to demonstrate the concept of a degree matrix and the DMC process. For the degree matrix formulation, let us examine the graph (assumed to be a local picture taken from a much larger graph) in Figure~\ref{fig:DMCExample} with circled node as the origin. Then the boxed nodes are the first neighbors of the origin. Starting from the first neighbor in the upper left corner and going counter clock-wise, the degrees of the five first neighbors are $3, 4, 3, 5, 3$. Suppose (arbitrarily) $m$ in this case is $10$, then the row vector representation for the circled node would be $[3, 3, 3, 4, 5, 0, 0, 0, 0, 0]$ in the degree matrix for the larger graph. We do this for every node in the larger graph which this smaller piece is taken from to build a degree matrix. 
\begin{figure}
    \centering    \includegraphics[width=0.15\textwidth]{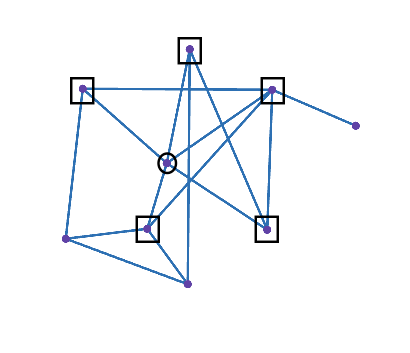}
    \caption{Example Graph to Demonstrate Degree Matrix Formulation}
    \label{fig:DMCExample}
\end{figure} 
\begin{figure}
    \centering
\includegraphics[width=0.2\textwidth]{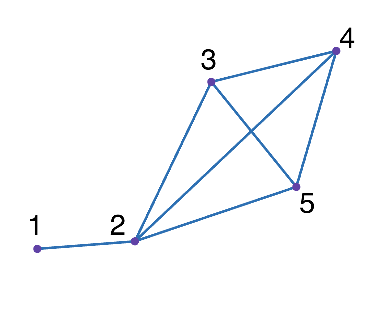}
        \caption{$F_{1}$}
        \label{fig:subfig1}
        \vspace{0.3cm}
\end{figure}
\begin{figure}
    \centering
        \includegraphics[width=0.2\textwidth]{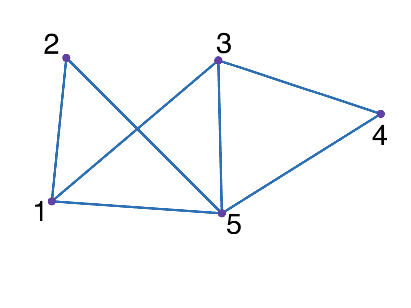}
        \caption{$F_{2}$}
        \label{fig:subfig2}
        \vspace{0.6cm}
\end{figure}

Now we demonstrate the DMC process. Consider the two graphs in Figures~\ref{fig:subfig1} and~\ref{fig:subfig2} as a pair that we are to align. Let the graph in Figure~\ref{fig:subfig1} be $F_{1} = (V_{F_{1}}, E_{F_{1}})$ and the graph in Figure~\ref{fig:subfig2} be $F_{2} = (V_{F_{2}}, E_{F_{2}})$. The node set $V_{l}$ of the largest overlapping subgraph consists of all five nodes, or: $V_{l} = V_{F_{1}} = V_{F_{2}}$. However, the graphs are not isomorphic, so the edge set $E_{l}$ of the overlapping subgraph is a strict subset of $E_{F_{1}}, E_{F_{2}}$, which in mathematical notation is: $E_{l} \subsetneq E_{F_{1}}, E_{F_{2}}$. Let the edge between node 3 and node 4 in $F_{2}$ be $w$. Then the overlapping subgraph takes the shape of $F_{2}\backslash \{w\}$. The degree matrices for $F_{1}$ and $F_{2}$ are

\[
\begin{array}{c@{\quad\text{and}\quad}c}
    \begin{bmatrix}
        4 & 0 & 0 & 0 \\
1 & 3 & 3 & 3 \\
3 & 3 & 4 & 0 \\
3 & 3 & 4 & 0 \\
3 & 3 & 4 & 0
    \end{bmatrix}
    &
    \begin{bmatrix}
        2 & 3 & 4 & 0 \\
3 & 4 & 0 & 0 \\
2 & 3 & 4 & 0 \\
3 & 4 & 0 & 0 \\
2 & 2 & 3 & 3 \\
    \end{bmatrix}
\end{array}
\]
Then we use the Hungarian algorithm to minimize the cost of row alignment between the two matrices above. The algorithm aligns nodes $1,2,3,4,5$ of $F_{1}$ to nodes $2,5,1,3,4$ of $F_{2}$ (one could check understanding of the concepts with this result). This is not an optimal result, but it is expected since the graphs are not heterogeneous (the relationship between performance of DMC and heterogeneity of graph will be discussed in depth in Section 5). This example is purely for demonstrating DMC.

In practice (implementing in code), the Hungarian algorithm can be replaced by a computationally more efficient  \textcolor{blue}{\texttt{linear\_sum\_assignment}} function from \textcolor{blue}{\texttt{scipy.optimize}} in Python which yields the same assignment \citep{Crouse2016}. However, in terms of comprehension, the Hungarian algorithm is more straightforward. The complexity of DMC is at most $O(N^{3})$ for a pair of graphs that each has $N$ nodes \citep{Edmonds1972,Tomizawa1971}. When $m << N$, the complexity is $O(N^{2}\cdot m) << O(N^{3})$.

\subsection{Variation 1: Greedy DMC}
The complexity of DMC can be reduced through replacing  the Hungarian algorithm with a combination of the auction algorithm and Hungarian algorithm, but accuracy is compromised \citep{Bertsekas1981}. The auction algorithm, with $O(N^{2})$ complexity and assignment threshold $\epsilon$, serves as a preparation step to reduce reliance on the Hungarian algorithm. We call this reduced version the \emph{Greedy DMC}. 

We bring to attention the case where $\epsilon = 0$. While this is a special case of the Greedy DMC algorithms with different $\epsilon$ thresholds, it is not ``Greedy'' in any sense. In fact, this is an algorithm that first finds \emph{exact} matches between rows in $M_{1}$ and $M_{2}$, then performs the Hungarian algorithm on remaining rows. As can be seen in Section 5, Greedy DMC's $\epsilon=0,10$ cases are very effective algorithms like DMC, and are worth studying in the future.

\subsection{Variation 2: Weighted DMC}
While the previous methods are meant for unweighted plain graphs, the Weighted DMC is designed for weighted graphs. Graphs can no longer be fully represented by just an adjacency matrix, which counts number of edges but makes no note of varying weights. The idea is simple: replace each entry in the degree matrix with its original value multiplied by the weight of the edge connecting the row’s node and the node the entry represents. 

Weighted DMC performed well on multiple datasets, suggesting it deserves further research--much like Greedy DMC. In particular, its strong performance on protein-protein interaction (PPI) networks (often used to test graph alignment) indicates that DMC’s success is not coincidental.

\section{Theoretical Motivations}
\subsection{Invariance}
One theoretical motivation for the method is that degree matrices remain \emph{invariant} in response to arbitrary node labeling. By invariant, we mean that degree matrices are in the same equivalence class defined by the following equivalence relation. 
\begin{theorem}[Equivalence Relation on Degree Matrices.]
We define a binary relation between two degree matrices: degree matrices $A,B$ are said to be related (i.e. $A \sim B$) if we can transform $A$ to become $B$ only through row swap operations. Moreover, we claim that this is an equivalence relation.
\end{theorem}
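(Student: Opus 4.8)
The plan is to verify the three defining properties of an equivalence relation---reflexivity, symmetry, and transitivity---by exploiting the group structure underlying row-swap operations. First I would make precise what ``transform $A$ to become $B$ only through row swap operations'' means: there exists a finite sequence of transpositions $\tau_{1},\dots,\tau_{k}$ of row indices such that applying them successively to $A$ yields $B$. Equivalently, encoding each transposition as a permutation matrix, $A \sim B$ holds if and only if $B = PA$ for some permutation matrix $P$ corresponding to an element of $S_{N}$, where $N$ is the common number of rows. Note $N$ (and the number of columns $m$) is well defined and preserved here, since both matrices are degree matrices built from $N$-node graphs in our set up and row swaps change neither dimension. I would record this reformulation as a short lemma, since it converts the combinatorial statement into an algebraic one about the action of $S_{N}$ on rows.

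With that reformulation in hand, the three properties follow from the group axioms for $S_{N}$. Reflexivity is immediate: the empty sequence of swaps (equivalently the identity permutation matrix $I$) gives $A = IA$, so $A \sim A$. For symmetry, suppose $A \sim B$, so $B = PA$; since a permutation matrix is invertible with $P^{-1} = P^{\top}$ again a permutation matrix, we obtain $A = P^{-1}B$, hence $B \sim A$---combinatorially, this is just reversing the list of transpositions, using that each transposition is its own inverse. For transitivity, if $A \sim B$ and $B \sim C$ with $B = PA$ and $C = QB$, then $C = (QP)A$ with $QP$ again a permutation matrix, so $A \sim C$; combinatorially this is concatenating the two swap sequences.

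I do not anticipate a genuine obstacle; the only step requiring a little care is the first one, namely checking that ``reachable from $A$ by row swaps'' coincides exactly with ``equal to $PA$ for some permutation matrix $P$,'' i.e.\ that row transpositions generate all permutations of the rows. This is the standard fact that transpositions generate $S_{N}$, which I would either cite or dispatch in one line by induction on the number of rows that are out of place. If desired, I would also remark that the resulting equivalence classes are precisely the sets of degree matrices sharing the same multiset of row vectors, which makes the invariance claim of Section~4.1 transparent.
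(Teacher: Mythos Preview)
Your proposal is correct and follows essentially the same approach as the paper: verify reflexivity via the empty sequence of swaps, symmetry by reversing the swap sequence, and transitivity by concatenating sequences. The paper stays purely at the combinatorial level without the permutation-matrix reformulation or the $S_{N}$ group language you add, but the underlying argument is identical.
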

\begin{proof}
A binary relation is an equivalence relation if and only if the relation is reflexive, symmetric, and transitive. For any matrix $A$, $A \sim A$ is true since we can obtain $A$ from performing no row swap operations on $A$, hence the relation is reflexive. 

$A \sim B$ implies there is a series of row swaps that can transform $A$ to $B$. Therefore, we can also obtain $A$ from $B$ through reversing the row swap process and so $B\sim A$. Hence, the relation is symmetric.

$A \sim B$ and $B \sim C$ implies we can transform $A$ into $B$ through row swaps, and also transform $B$ into $C$ through row swaps. This implies we can obtain $C$ from $A$ through first performing row swaps that transform $A$ into $B$ and then applying more row swaps that transform $B$ into $C$. Hence we can obtain $C$ from $A$ through row swaps and $A\sim C$, making the binary relation transitive.
\end{proof}

This is crucial for graph alignment for unattributed graphs (or on attributed graphs but not using attributes). 
\subsection{Performance Analysis}
We provide an analysis of performance of the DMC, which shows that the DMC is effective in differentiating high- and low-degree nodes. In a heterogeneous graph similar to one generated by the Barabási-Albert model (introduced in Section 5), a significant amount of nodes are either a concentrated hub center or close to a leaf due to the formation of hubs. Therefore, we can assume the row vector in a degree matrix for a high-degree node (a hub) looks like
\begin{equation}
[1, 1, 1, 1, 1, 1, 1, 2, 3, 0, 0]
\end{equation}
just as an example. The main properties of this row vector are that entries are small numbers (low-degree nodes) and there are many non-zero entries, due to node's large degree. For a leaf, the row vector will look like
\begin{equation}
[100, 0, 0, 0, 0, 0, 0],
\end{equation}
for instance. In these low-degree node row vectors, we observe that most entries are zero, and the few non-zero entries tend to be large. To perform a cleaner analysis, let us directly examine the following row vectors:
\begin{equation}
\mathbf v_{1} = [k, 0, 0, ..., 0]
\end{equation}
for a leaf connected to a hub with $\text{deg}(v) = k$, and 
\begin{equation}
\mathbf v_{2}=[1, 1, 1, ..., 1]
\end{equation}
for the hub, assumed to be connecting to $k$ leaves for simplicity. Both of these row vectors have a length of $k$, since extra 0's hanging at the end does not affect cost of alignment. Moreover, suppose there is another high-degree node row vector with the same length which takes the form
\begin{equation}
\mathbf v_{3} = [1+\epsilon, 1+\epsilon, ..., 1+\epsilon].
\end{equation}
One might wonder whether the accumulation of small differences between each entry in the row vectors lead to a large enough difference that confuse the alignment of a high-degree node with another high-degree node versus with a low-degree node. 

We assume $k \rightarrow \infty$, and $\epsilon \rightarrow 0$. Then the cost of aligning $\mathbf v_{1}$ and $\mathbf v_{2}$ is $(k-1)^{2} + (k-1) \cdot (0-1)^{2} = k(k-1)\approx k^{2}$. This is the cost of aligning a low-degree row vector with a high-degree row vector. We also compute the cost of aligning $\mathbf v_{2}$ with $\mathbf v_{3}$, which is $k\cdot (1+\epsilon - 1)^{2} = k\epsilon^{2}\approx 0$. Even if we relax our assumption so that $\epsilon \rightarrow 0< C < \sqrt{k}$ rather than having $\epsilon \rightarrow 0$, the squared effect will make the alignment of high to high more reasonable.

\subsection{More General Theoretical Motivations}
A degree matrix ($N\times m$) encodes information more compactly than an adjacency matrix ($N\times N$), since we contain both degree distribution and each node's local geometric details in a smaller matrix. Additionally, combining local (degrees) and global (assignment) perspectives is important because the of the Friendship Paradox. In general terms, it is the observation that friends of a person typically have more friends than that person. The Friendship Paradox shows how local observations can be completely shifted when we examine the same neighborhood globally \citep{Alipourfard2020,Lerman2016}. This property of graphs has made it possible for multiple definitions of ``average degree'' to exist on different scales, and hence led to many ways of formulating the problem to be addressed in the paradox. 
Intuitively, our method works well for heterogeneous graphs for the following reason. Suppose there is a graph that has nodes with degrees that are almost all unique. Note here that the phrase ``almost all unique'' is used to carefully address the possible scenario as follows: if we allow at most one edge between every pair of vertices on a graph $G_{example}$ with $n_{example}$ vertices and require at least degree one for all nodes, then the maximum degree is at most $n_{example} - 1$ for every node, and in this case it is impossible to create a bijection between possible degrees and the nodes. With a heterogeneous graph with most degrees different, most nodes have row vectors that are relatively unique and distinguishable, compared to, for example, a complete graph.

\section{Experimental Results}
We first list here the sources of downloaded datasets. From COSNET \citep{Zhang2015}, we downloaded the Flickr, Last.fm, and Myspace networks; from SNAP \citep{SNAP}, we downloaded the Facebook and YouTube networks, along with the CA and PA roadmaps, and a multi-layer tissue PPI network (unweighted); from STRING \citep{STRING}, we downloaded five weighted PPI networks for different species that are typical subjects in biological studies; all other networks are taken from Network Repository \citep{NetworkRepository}. Before we dive into results for specific graphs, we note here that all isomorphic graph alignment yielded 100\% correct results.
\subsection{Unweighted Graphs: DMC and Greedy DMC}
\subsubsection{Real World Networks}
In Table~\ref{tab:baselines}, we display statistics that compare DMC and Greedy DMC with baselines after applying all methods to the unweighted multi-layer PPI network \citep{Gao2021,Heimann2018,Singh2008,Zhang2016}. Using the edge deletion sampling method, we set deletion probability $p_{d} = 0.01$. Regarding the different graph alignment methods, we note that the Greedy DMC uses a threshold of $\epsilon = 100$. ``Two-Step'' is, as its name suggests, a two-step correction method that uses degrees for alignment, and refines that first-step alignment through average degrees of neighbors. ``Euc Dist.'' aligns two graphs by minimizing Euclidean distances between embeddings. We can observe through direct comparison that DMC has highest accuracy, but larger complexity compared to REGAL and FINAL \citep{Heimann2018,Zhang2016}. The Greedy DMC strikes a similar balance compared to FINAL: while it is more accurate, it could also be higher in time complexity. 
\begin{table}
    \centering
    \caption{Comparison with baselines on PPI network.}
    \begin{tabular}{lrr}
        \toprule
        Method  & Score (\%) & Complexity \\
        \midrule
        \textbf{DMC}     & 96.09          &     $\sim O(n^{2})-O(n^{3})$    \\
        \textbf{Greedy DMC}  & 50.46          & $\sim O(n^{2})-O(n^{3})$        \\
        REGAL  &    $80<p<90$       &    $\sim O(n)-O(n^{2})$    \\
        FINAL  & $35<p<45$          & $\sim O(n^{2})$        \\
        Klau  & $20<p<30$         & $\sim O(n^{5})$        \\
        IsoRank  & $5<p<15$          & $\sim O(n^{4})$        \\
        Two-Step  & 4.42          & $\sim O(n^{2})-O(n^{3})$        \\
        Euc Dist.  & 0.15 (near zero)          & $\sim O(n^{2})-O(n^{3})$        \\
        \bottomrule
    \end{tabular}
    
    \label{tab:baselines}
\end{table}

After showing the potential of DMC and Greedy DMC by comparing to baselines on a classic network, we turn to demonstrating that DMC works better when networks are more heterogeneous. We first draw an observed correlation between variance in degree distribution and performance of the DMC; the correlation is not strict, but implies from at least one perspective that the DMC works well for heterogeneous networks in general. Table~\ref{tab:MandV} provides a summary of statistics for social networks and roadmaps, and Figures~\ref{fig:subfigSocial} and~\ref{fig:subfigRoad} provides results applying DMC to the networks. The best performing Facebook and YouTube have large variances in their degree distributions. Comparing Flickr, Last.fm, and Myspace, Flickr and Last.fm are more suitable for DMC, which is most likely due to their much larger variance in degree distribution. However, the correlation between variance and performance can be false at times, as can be seen from the California and Pennsylvania roadmaps. The two roadmaps have close degree distribution means and variances, with the Pennsylvania one having lower degree variance, but it outperforms the California roadmap significantly. 
\begin{table}
  \centering
  \caption{Means and variances of degree distributions of $G_{s}$.}
  \begin{tabular}{lrrr}
    \toprule
    Networks & Mean & Variance & Total Nodes ($N$)\\ \midrule
    Flickr & 52.36 & 5330.29 & 5000\\  
    Last.fm & 26.03 & 2164.02 & 5000\\
    Myspace & 7.59 & 113.32 & 5000\\
    Facebook & 85.06 & 4040.27 & 700\\ 
    YouTube & 14.52 & 960.27 & 5000\\  
    CA Roadmap & 2.55 & 0.96 & 1000\\  
    PA Roadmap &  2.88 & 0.78 & 1000\\ \bottomrule
  \end{tabular}
  
  \label{tab:MandV}
\end{table}
\begin{figure}
    \centering
    \includegraphics[width=0.35\textwidth]{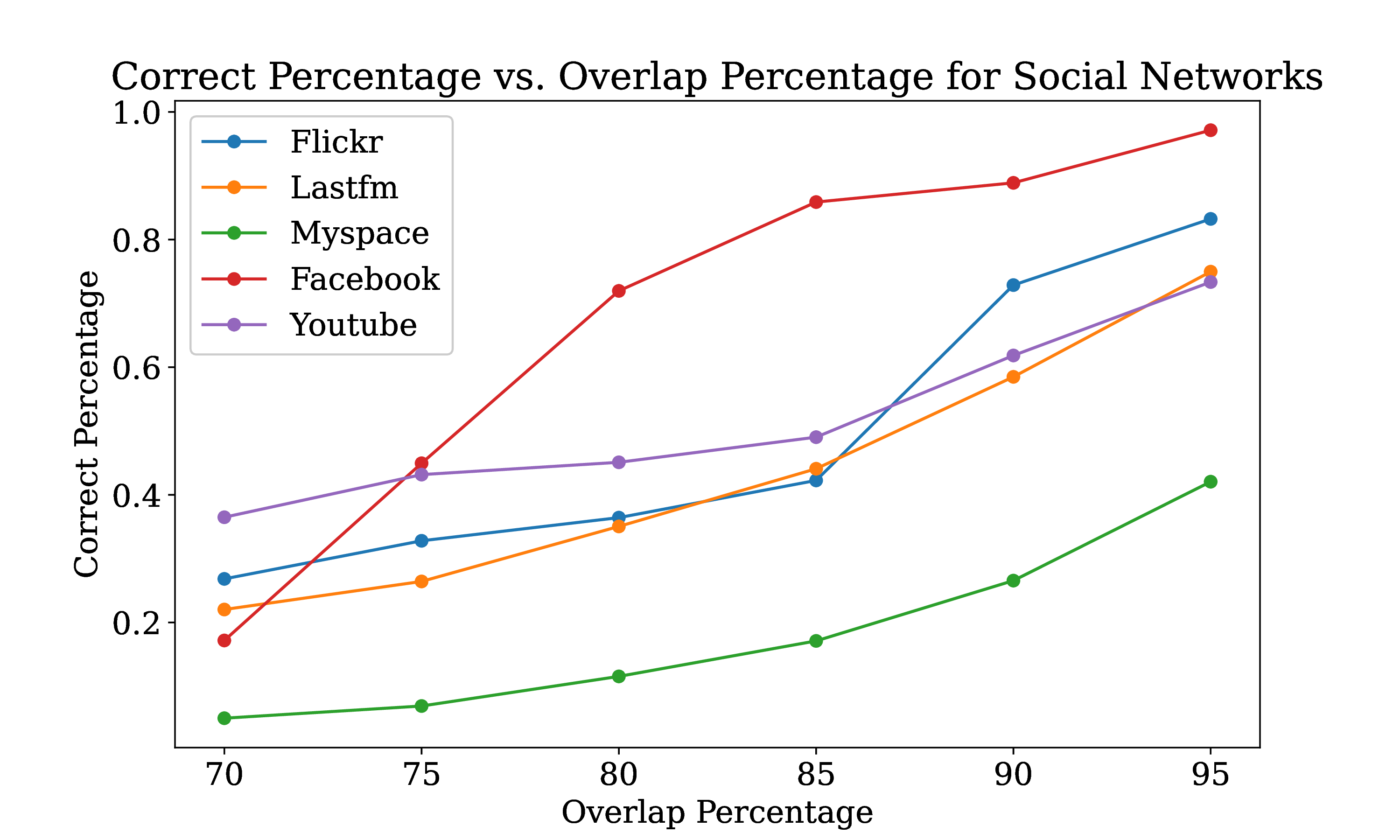}
        \caption{Social Networks}
        \label{fig:subfigSocial}
        \vspace{0.6cm}
\end{figure}
\begin{figure}
    \centering
        \includegraphics[width=0.35\textwidth]{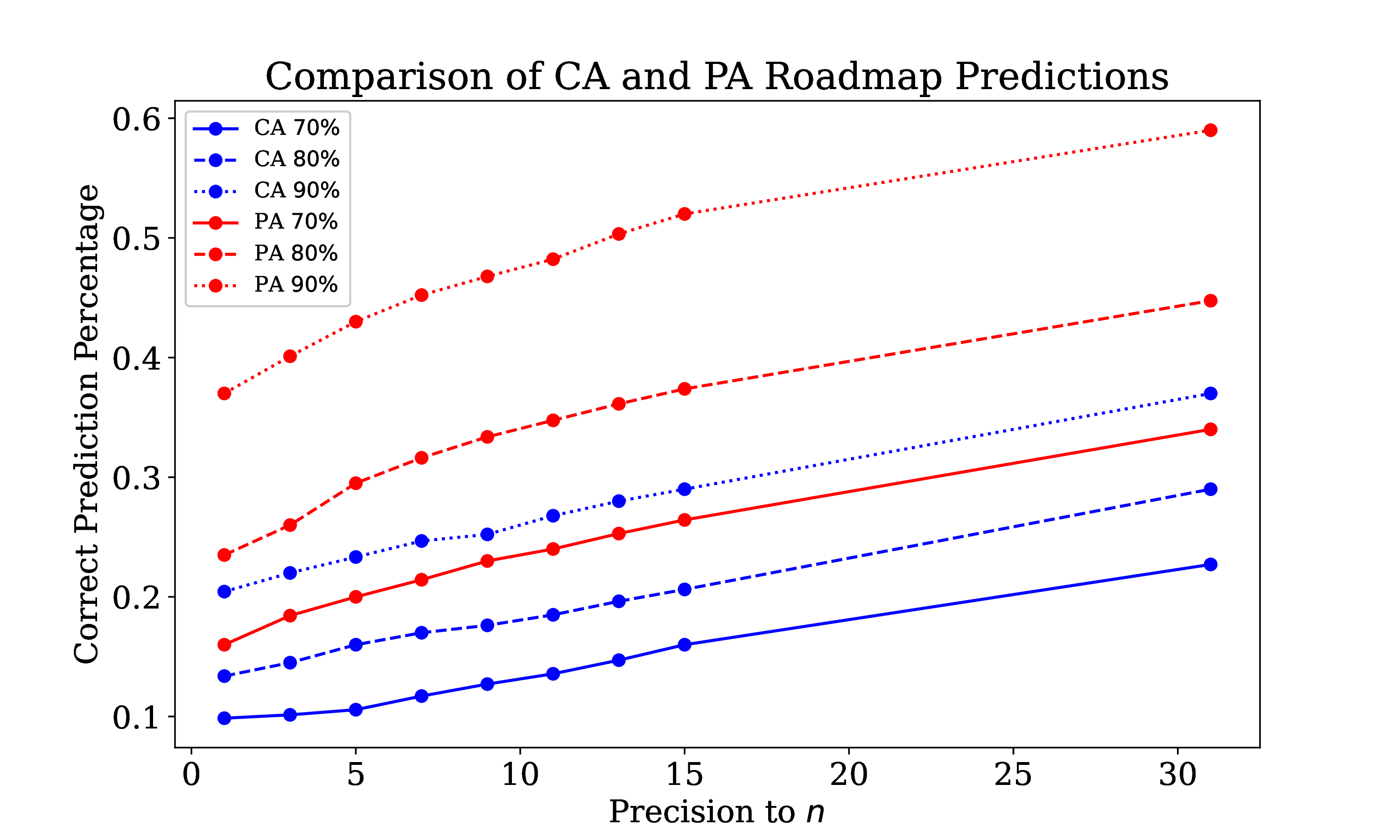}
        \caption{Roadmaps}
        \label{fig:subfigRoad}
        \vspace{0.6cm}
\end{figure}

Moreover, we show that DMC works better on graphs with high densities and clusterings in Table~\ref{tab:bioExperiments}. We provide results of applying DMC to more biological networks randomly chosen from the Brain Network and Biological Network sections in the Network Repository database. In Table~\ref{tab:bioExperiments}, ``grid-fission-yeast'', ``ce-gn'',``grid-human'', and ``yeast'' are from the Biological category, and ``mouse-retina-1'', ``fly-drosophila-medulla-1'', and ``mouse-kasthuri'' are from the Brain category.
\begin{table}
  \centering
  \caption{Table of experiments on biological networks, ordered by DMC's performance on $p = 90\%$ overlap subgraphs.}
  \begin{tabular}{lrrr}
    \toprule
    Networks & Score & Density & Clustering\\ \midrule
    grid-fission-yeast & .9932 &\textbf{.0123} & \textbf{.1874}\\
mouse-retina-1& .9900&\textbf{.9983} & \textbf{1.5539}\\
ce-gn& .9894&\textbf{.0218} & \textbf{.1839}\\
fly-drosophila-medulla-1& .9378&\textbf{.0211} & \textbf{3.8729}\\
grid-human& .8440&.0014 & .0800\\
mouse-kasthuri & .6156&.0032 & 0 \\ 
yeast & .5822&.0018 & .0708 \\ 
\bottomrule 
\end{tabular}
\label{tab:bioExperiments}
\end{table} 
We note that the highlighted densities are those above the 0.01 threshold, and for the average clustering coefficient the threshold is 0.1 (density and clustering taken from Network Repository). The graphs with both a high density and a high clustering coefficient outperformed the other graphs significantly. 

In this paragraph, we examine the behavior of DMC and Greedy DMC applied to the unweighted multi-layer PPI network under different conditions. For DMC, we see in Figure~\ref{fig:perfDecay} that DMC's performance decays as deletion probability (noise level) increases from 0.01 to 0.1. This is expected and no surprise for a new algorithm, but it is worth pointing out that the performance does not rapidly drop to near zero values even with ten times the experimental noise level, implying that the method has potential to withstand noise. In Table~\ref{tab:GreedyDMC}, we see that when $\epsilon = 0,10$, the Greedy DMC algorithm maintains high performance. We will study these special cases in the future. 
\begin{figure}
    \centering
\includegraphics[width=.35\textwidth]{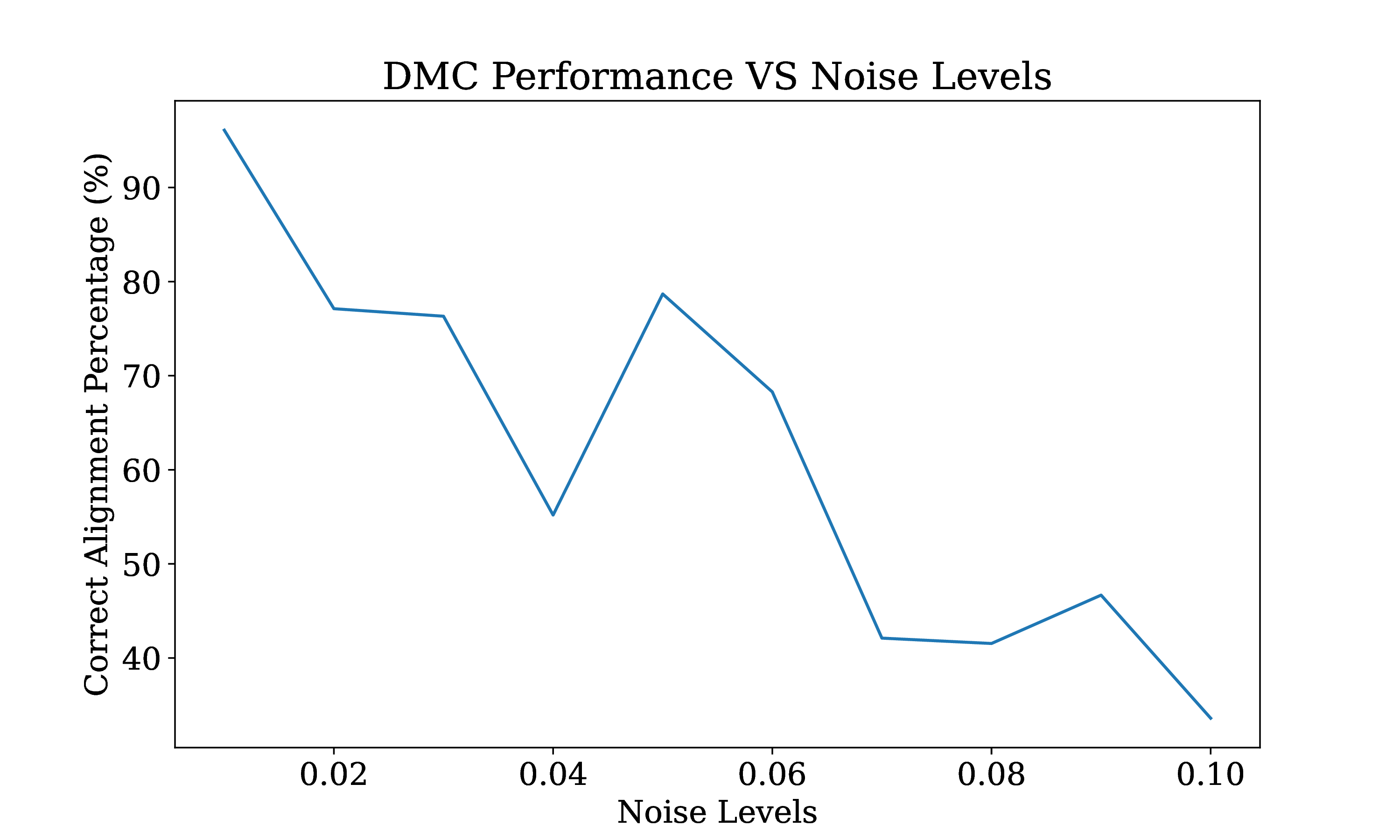}
    \caption{Performance Decay Plot}
    \label{fig:perfDecay}
    \vspace{0.6cm}
\end{figure}
\begin{table}
    \centering
    \caption{Greedy DMC Performance}
    \begin{tabular}{lr}
        \toprule
        $\epsilon$  & Score (\%) \\
        \midrule
        0  & 96.58\%\\
        10 & 96.32\%\\
        100 & 50.46\%\\
        1000 & 0.18\%\\
        \bottomrule
    \end{tabular}
    
    \label{tab:GreedyDMC}
\end{table}
\subsubsection{Artificial Networks}
In this section, we introduce the Erdös-Rényi, Barabási-Albert, and Chung-Lu models that were derived to generate complex networks, potentially to model real world network behaviors.  

The Erdös-Rényi model is the most basic random graph generation model. One common way to define this is to fix the number of vertices, say $n_{er}$, then connect an edge between every pair of vertices with some probability $p_{er}$. The resulting graph is typically denoted as $G(n_{er}, p_{er})$ \citep{Durrett2009}. One interesting property of these graphs is that their degree distributions are binomial, as can be be seen by the following calculation:
\begin{equation}
\textbf{P}(\text{deg}(v) = k) = \binom{n-1}{k}p^{k}(1-p)^{n-1-k}.
\label{eqtn:binomial}
\end{equation}
The probability considers the possibility of having degree of $k$ (being connected to $k$ edges) when the largest possible degree is $n-1$, given the total number of nodes is $n$. $p$ in (\ref{eqtn:binomial}) is the same $p_{er}$ for generating the random graph.

The Barabási-Albert model considers properties observed on real world networks, especially the preferential attachment phenomena \citep{Barabasi2016}. Instead of attaching new edges with uniform probability, it is possible to attach new edges with weighted probabilities. Suppose we already have $m$ nodes with some connections, then for the $(m+1)^{\text{th}}$ node, the probability that it will be attached to a previous node $v_{i}$ is 
\begin{equation}
\textbf{P}(v_{i}) = \dfrac{\text{deg}(v_{i})}{\Sigma_{i=1}^{m}\text{deg}(v_{i})}
\end{equation}
We can observe that the probabilities of attaching the new node to any previous node sums up to one, forming a probability distribution; moreover, higher degree nodes attract more new nodes. This usually leads to what we call \emph{hubs}, which are local centers that are attached to many low degree nodes. It is common for hubs to exist in biological and social networks, and hubs add to the heterogeneity of graphs. However, having hubs with too much concentration could lead to confusion when differentiating low-degree nodes. Therefore, it is always good to strike a balance between concentrating and diffusing nodes, but how this is done in the real world is yet unknown.

Now we introduce the Chung-Lu model. It also attempts to capture real world network properties, but instead of focusing on preferential attachment or community behaviors, they attempt to directly approximate real world networks' degree distributions \citep{Clauset2019}. Suppose we have a specific type of degree distribution in mind and we expect $m$ nodes in total, then the attachment probability between each node $v_{i}$ and another node $v_{j}$ is as follows:
\begin{equation}
\textbf{P}(ij) = \dfrac{\text{deg}(v_{i})\text{deg}(v_{j})}{\Sigma_{k=1}^{m}\text{deg}(v_{k})}
\label{eqtn:CLequation}
\end{equation}
In (\ref{eqtn:CLequation}), the degrees are target degrees according to our degree distribution in mind. With the attachment process described above, the expected degree at each node $v_{i}$ is
\begin{equation}
\Sigma_{k=1}^{m}\textbf{P}(ik) = \Sigma_{k=1}^{m}\dfrac{\text{deg}(v_{i})\text{deg}(v_{k})}{\Sigma_{k=1}^{m}\text{deg}(v_{k})} = \text{deg}(v_{i})
\end{equation}
which theoretically guarantees that the Chung-Lu model will generate a synthetic graph with a degree distribution that is close to our initial distribution in mind. 

Despite the careful construction of models, human knowledge of real world networks are nowhere near complete. Through comparison of DMC results on real world networks with experimental results on synthetic graphs, we will see that the complexity and irregularity of real world networks make their nodes more effectively differentiable through the DMC, and there is still a notable gap between the synthetic models and real world networks since they exhibit different behaviors in response to the same graph alignment method. 

Through examining existent literature, we summarize a set of (non-comprehensive) key words that are properties observed from real world networks: hierarchical \citep{Ravasz2003}, fractal \citep{Wen2021}, copying behavior \citep{Barabasi2016}, preferential attachment \citep{Barabasi2016}, small world \citep{Watts1998}, clustering \citep{Opsahl2009}, power-law degree distribution \citep{Barabasi2016}. These properties are of course not enough to capture the entirety of real world networks, which makes building new models to generate complex networks interesting.

\begin{definition}[k-partite Graph] A \emph{k-partite Graph} is a graph with nodes that can be split into $k$ independent sets, where nodes within each set are never connected. 
\end{definition}

We first applied DMC to graphs generated through the Erdös-Rényi and Barabási-Albert models. We examine the performance of DMC on the graphs with respect to different partite numbers $k$. For our Erdös-Rényi experiments, we fix total number of nodes to be $n_{er} = 100$, then split the nodes into $k$ partite groups; the partition of nodes is arbitrary. After that, we connect edges between different partite groups with probability of $p_{er} = 0.5$. Then we use the random walk graph sampling method to obtain two graphs with 90\% overlap, and perform DMC. We run the same experiment for 20 times, then take the mean performance. For our Barabási-Albert experiments, we do a similar thing. Every time we add a new node, we connect it to five different previous nodes (add 5 edges). 

We not only examined the completely random Erdös-Rényi and heterogeneous Barabási-Albert, but also experimented with intermediate graphs (i.e.~a spectrum of graphs ranging from most uniform to most heterogeneous). We move along the spectrum of Poisson distributions (see Figure~\ref{fig:Poisson}) with different parameter values, which can approximate distributions from binomial (black) to power law (yellow). A formal representation of the Poisson distribution for each node $v$ is as follows:
\begin{equation}
\text{P(deg($v$) = $k$)} = \dfrac{\lambda^{k}e^{-\lambda}}{k!}
\end{equation}
where $\lambda$ is a parameter. These pre-determined distributions are encoded in graphs through the Chung-Lu model. Figures~\ref{fig:CL_Partite_20}, \ref{fig:CL_Partite_10}, \ref{fig:CL_Partite_5}, and \ref{fig:CL_Partite_1} show how the behavior of DMC converges to the heterogeneous end.

Real world networks are more complex than we can imagine, hence the difficulty of capturing them with mathematical language. However, it is interesting the simple DMC works better on the more complex real world networks than synthetically generated graphs. This shows that modeling complex networks is no easy task, and understanding the complexity will potentially bring us rich information. Moreover, the fact that DMC has higher performance when using the more sophisticated Barabási-Albert and Chung-Lu, instead of Erdös-Rényi, indicates that it is no coincidence that DMC works well on heterogeneous real world networks. Additionally, while the Erdös-Rényi model was not as suitable for DMC, we discovered as a side result a linear relationship between the performance of DMC and the partite number of the Erdös-Rényi networks. 
\begin{figure}
    \centering
\includegraphics[width=.35\textwidth]{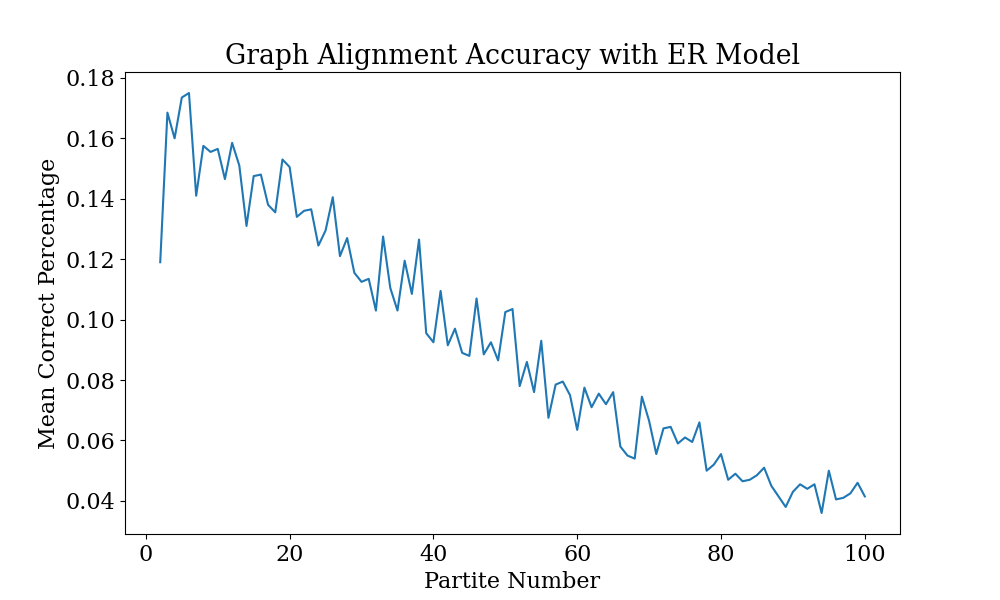}
    \caption{Erdos-Renyi Model: negative linear relationship.}
\end{figure}
\begin{figure}
    \centering
\includegraphics[width=.35\textwidth]{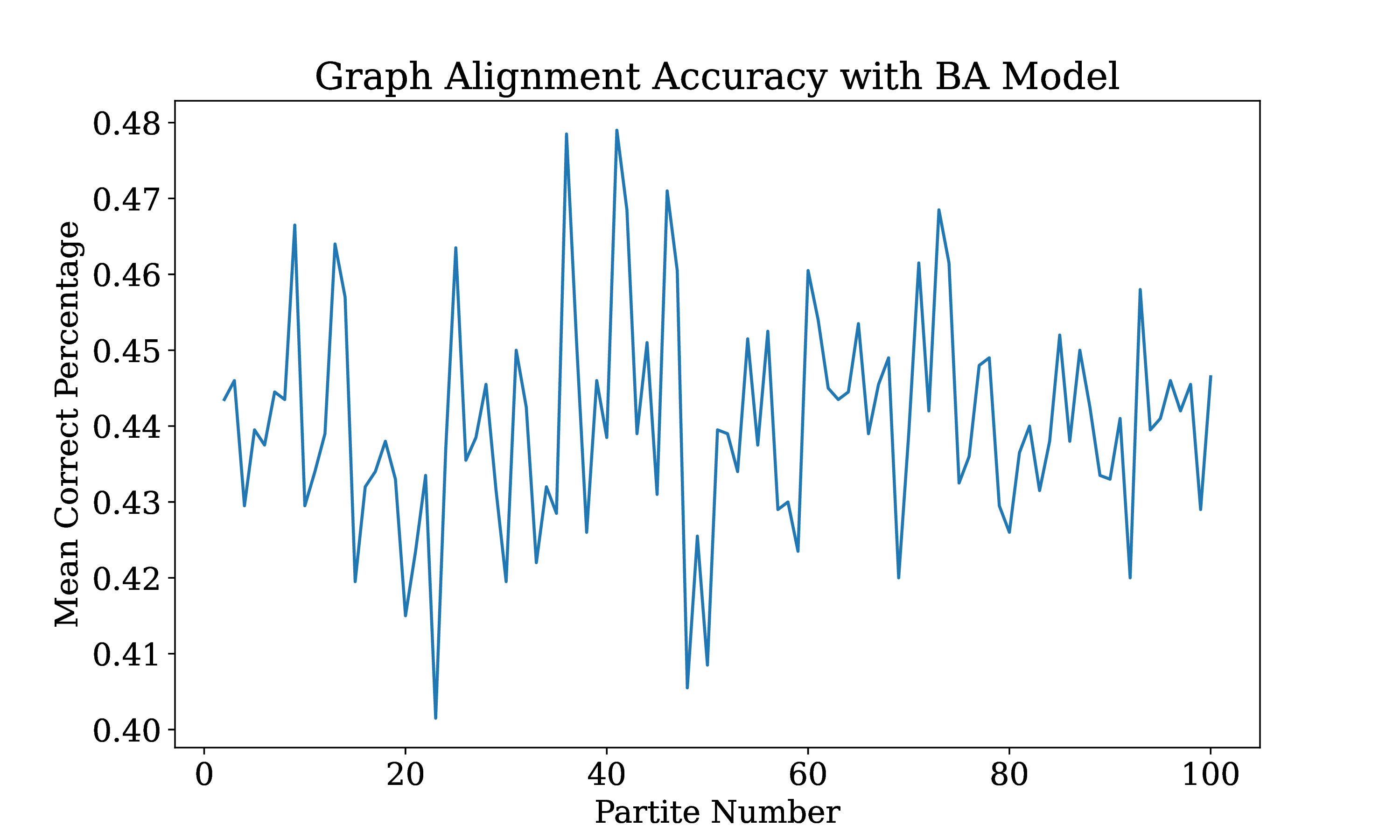}
    \caption{Barabasi-Albert Model}
\end{figure}

\begin{figure}[h!]
    \centering    \includegraphics[width=0.35\textwidth]{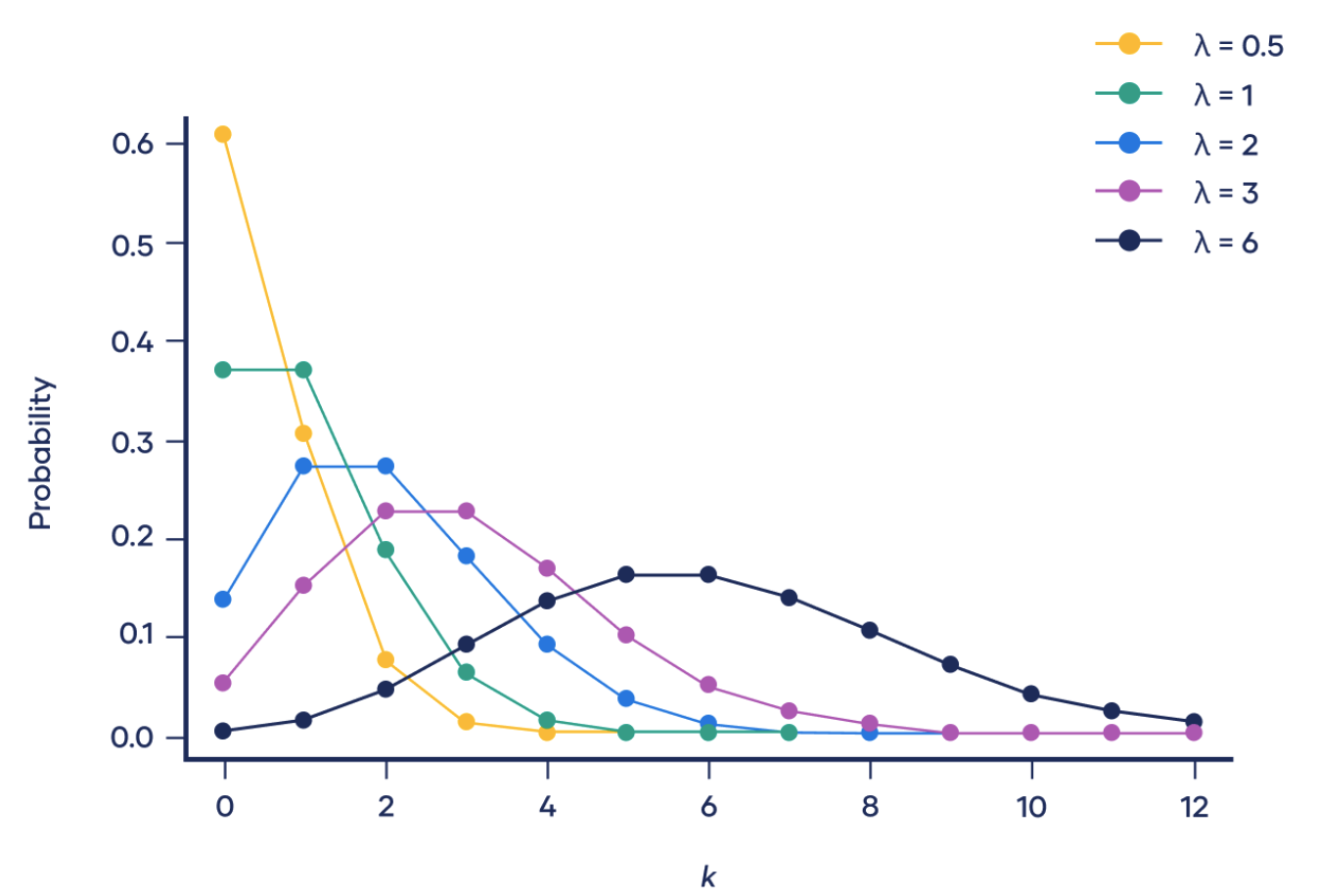}
    \caption{Evolution of Poisson distribution with decreasing $\lambda$. Taken from Scribbr.}
    \label{fig:Poisson}
\end{figure}
\begin{figure}
    \centering    \includegraphics[width=0.35\textwidth]{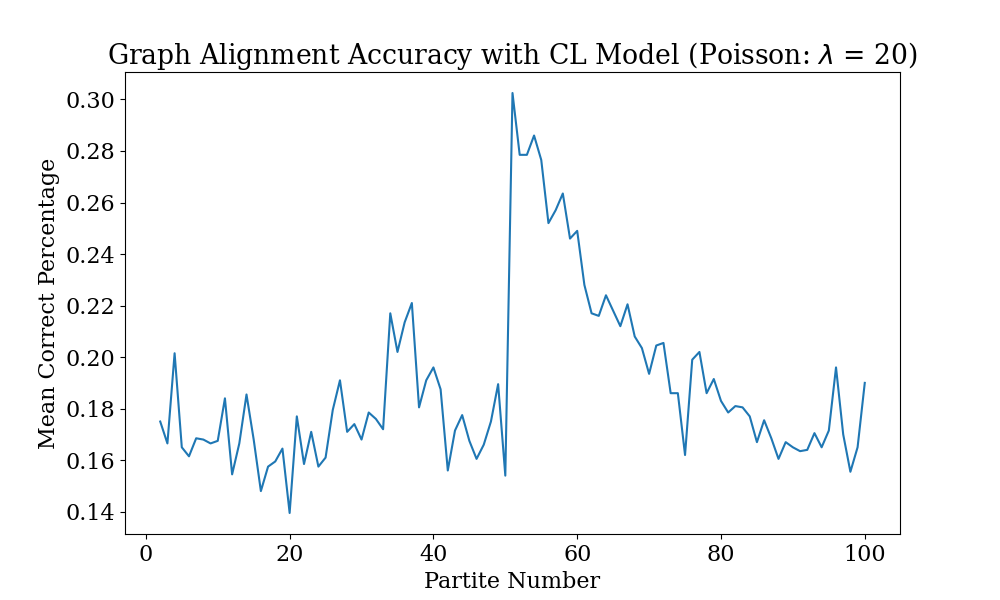}
    \caption{Performance of DMC applied to Chung-Lu generated graphs with respect to different partite numbers when parameter $\lambda=20$.}
    \label{fig:CL_Partite_20}
\end{figure}
\begin{figure}
    \centering    \includegraphics[width=0.35\textwidth]{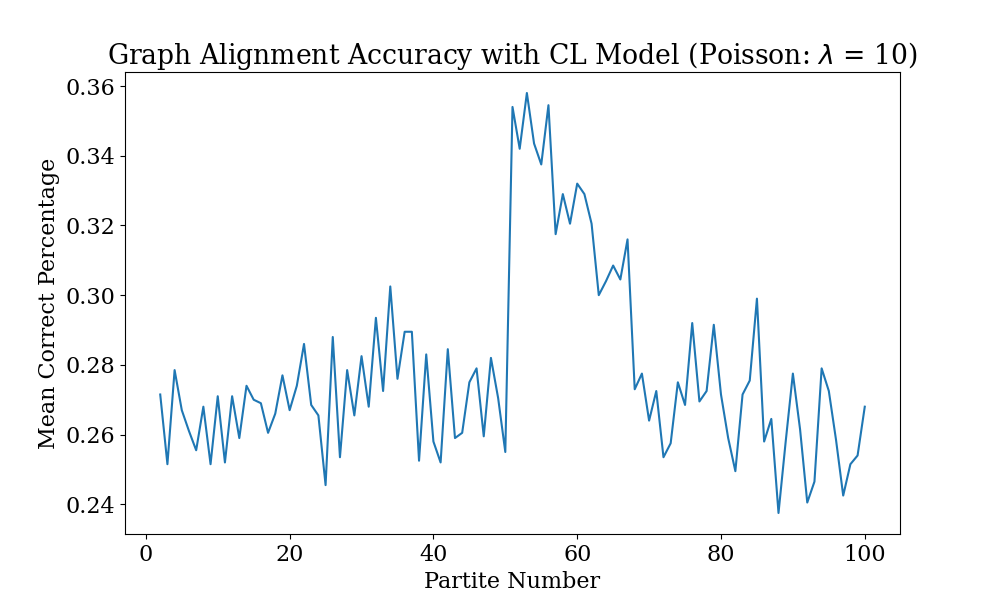}
    \caption{Performance of DMC applied to Chung-Lu generated graphs with respect to different partite numbers when parameter $\lambda=10$.}
    \label{fig:CL_Partite_10}
\end{figure}
\begin{figure}
    \centering    \includegraphics[width=0.35\textwidth]{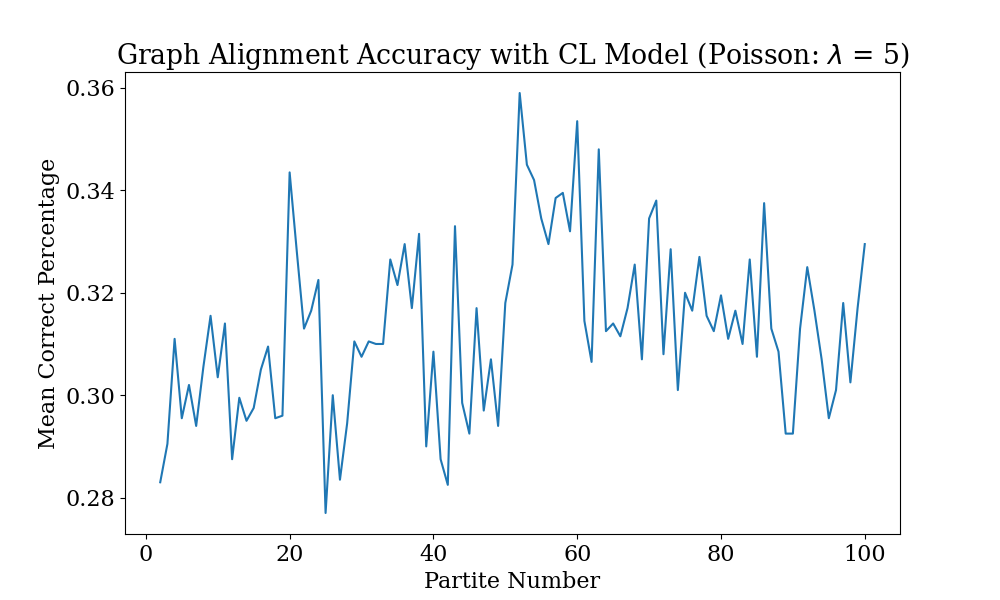}
    \caption{Performance of DMC applied to Chung-Lu generated graphs with respect to different partite numbers when parameter $\lambda=5$.}
    \label{fig:CL_Partite_5}
\end{figure}
\begin{figure}
    \centering    \includegraphics[width=0.35\textwidth]{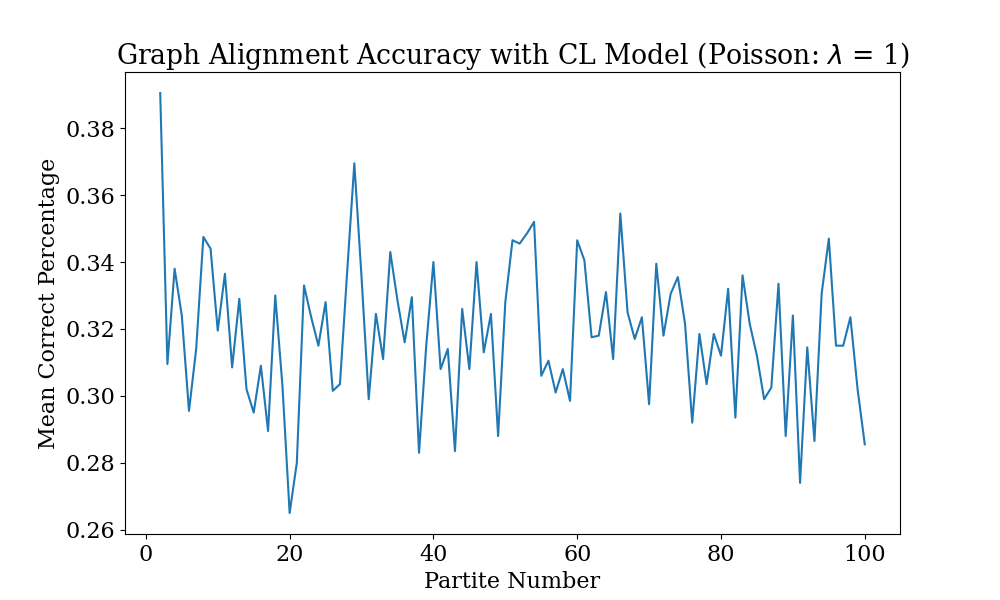}
    \caption{Performance of DMC applied to Chung-Lu generated graphs with respect to different partite numbers when parameter $\lambda=1$.}
    \label{fig:CL_Partite_1}
\end{figure}

\subsection{Weighted Graphs: Brief Examination of Weighted DMC}
As previously mentioned, we also dedicate a brief section to the Weighted DMC to demonstrate its potential, which in turn will also support that DMC, albeit its simplicity, is a valid approach. Table~\ref{tab:WeightedPPI0} provides the results of applying the Weighted DMC to five weighted PPI networks for different species that are typical subjects of biological studies, all taken from the STRING dataset \citep{STRING}. Among the five species, E.~Coli is the only prokaryotic organism, meaning that it has simpler cell structure lacking a sophisticated nucleus. We can reasonably speculate that the PPI for E.~Coli is less heterogeneous, leading to a less accurate Weighted DMC result. We have also discussed in Section 5.1 that heterogeneous networks tend to be more suitable for the simple DMC. The other four PPI networks all yielded performance above $95\%$, which is high.

We run the entire process, from random walk to graph alignment, for ten times and take their mean performance. All experiments sampled 5000-node graphs except for E.~Coli, where we sampled 3000-node graphs, due to its limited size. We used $p_{d}=0.01$ deletion probability for all experiments.
\begin{table}
    \centering
    \caption{Weighted PPI Networks}
    \begin{tabular}{lr}
        \toprule
        Species  & Score (\%) \\
        \midrule
        Human  & 99.19\\
        Yeast & 98.06\\
        Mouse & 97.27\\
        Fruit Fly & 96.26\\
        E. Coli & 91.46\\
        \bottomrule
    \end{tabular}
    \label{tab:WeightedPPI0}
\end{table}

We have also run experiments on social networks, where the results are reported in Table 6. ``soc-sign-bitcoin-otc'' and ``soc-sign-bitcoin-alpha'' are taken from the SNAP database \citep{SNAP}, and ``music-cotagging'', ``ai-cotagging'', ``apple-cotagging'', ``travel-cotagging'', and ``economics-cotagging'' are taken from the Computer Science Department webpage of Cornell University \citep{Fu2020}. The ``cotagging'' networks are networks by keywords on Stack Exchange. We used the edge deletion sampling method, setting deletion probability $p_{d} = 0.01$, and run all experiments ten times to take the mean performance. We can speculate here that including weight information of edges has helped us gain more insight into local graph structure, since the overall performance of the Weighted DMC on the social networks is higher than that of DMC applied to unweighted social networks. 
\begin{table}
    \centering
    \caption{More Weighted Networks}
    \begin{tabular}{lrr}
        \toprule
        Network  & Score (\%) & Sampled Nodes \\
        \midrule
        soc-sign-bitcoin-otc  & 77.44 & 5000\\
        soc-sign-bitcoin-alpha & 82.08 & 3000\\
        music-cotagging & 98.91 &467\\
        ai-cotagging & 96.95 & 285\\
        apple-cotagging & 98.65 & 1063\\
        travel-cotagging & 97.35 & 1779\\
        economics-cotagging & 98.27 & 369 \\
        \bottomrule
    \end{tabular}
    \label{tab:WeightedPPI}
\end{table}

\section{Conclusion and Future Work}
We proposed two graph alignment methods for unweighted networks: DMC and Greedy DMC, and also another one for weighted networks, which we named Weighted DMC. We showed that they were theoretically motivated, and analyzed experimental results. DMC's accuracy on classic PPI networks proves to be higher than carefully chosen baselines, and it works best for heterogeneous graphs which has been demonstrated from multiple perspectives. This process included providing an overview of synthetic graph generation models and a short discussion of the difficulty of modeling complex networks. Future works could examine special cases of Greedy DMC and the Weighted DMC, as pointed out in previous sections; one could also test this sequence of methods with more datasets to more clearly define the optimal set of networks that should be aligned using DMC, a simple method that can easily be replicated. 


 






\bibliography{mybibfile}

\end{document}